%Saikat patra 02/04/2021 (all reference is not visible in pdf)
\documentclass[12pt, reqno]{amsart}
\usepackage[utf8]{inputenc}
\usepackage{amsmath}
\usepackage{amssymb,amscd,amsfonts,amssymb, color, xcolor, enumerate,latexsym}
\usepackage{amsthm}
\usepackage[pdftex,bookmarks,colorlinks]{hyperref}
\hypersetup{colorlinks=false,
unicode=true,pdftitle={Interpolation by entanglement breaking channels, random unitary channels },pdfauthor={Arnab Roy and Saikat Patra},bookmarks=false} 
\numberwithin{equation}{section}
\newtheorem{theorem}{Theorem}[section]

\newtheorem{lemma}[theorem]{Lemma}
\theoremstyle{definition}
\newtheorem{defn}{Definition}[section]

\newtheorem{prop}{Proposition}
\newcommand{\tr}{\mathrm{Tr\,}}

\theoremstyle{remark}

\title[Interpolation of Quantum Channels Using Conic Programs]{Interpolation by Different Types of Quantum Channels Using Conic Programs }
\author{Arnab Roy}
\address{Arnab Roy, Address: Department of Mathematical Sciences, University of Delaware, Ewing Hall, Newark, Delaware, 19716}
\email[Arnab Roy]{\href{arnabroy@udel.edu}{arnabroy@udel.edu}}
\author{Saikat Patra}
\address{Saikat Patra, Address: Department of Mathematical Science, Indian Institute of
Science, Education, \& Research (IISER), Berhampur, Transit campus -
Govt. ITI, NH 59, Berhampur 760 010, Ganjam, Odisha, India}
\email[Saikat Patra]{\href{saikatp@iiserbpr.ac.in}{saikatp@iiserbpr.ac.in}}

\begin{document}
\begin{abstract}
We have found conic programs for getting different types of quantum channels as outputs of interpolation problems.
Afterwards, we have generalised our results for getting channels that belong to a convex set as outputs of the interpolation problem. Then we show the existence of an Entanglement breaking channel for orthogonal sets of input and output matrices.
\end{abstract}
\maketitle
\tableofcontents

\newpage

\section{Introduction}
\par One of the standard problems in classical information theory is the
following: given two probability distributions, find a stochastic method to distinguish them. This can be answered probabilistically and a
relevant area of research is known as hypothesis testing. A generalization of this
problem with more than two probability distributions is known as multiple
hypothesis testing. The quantum version of these problems, i.e. quantum
hypothesis testing also has a rich history. One can see the book of Hayashi
\cite{hayashi-1} and the references therein to know the history of the problem.

\par In this paper, we consider a quantum version of the problem from a matrix
analysis point of view. Following the notion of Kolmogorov, a probability space
can be represented by a triplet $(\Omega, \Sigma, f)$, where $\Omega$ is the
sample space, $\Sigma$ is the set of events, and $f$ is a probability
distribution. A quantum probability space can similarly be represented by a
triplet $(H, \mathcal{P}(H), \rho)$, where sample space is represented by a
complex separable Hilbert space $H$, set of events is represented by the set of
projections on $H$ denoted by $\mathcal{P}(H)$, and a probability distribution is
replaced by a positive semi-definite trace class Hermitian operator $\rho$ where
$\tr(\rho)=1$. This is called a state. If $\rho$ is a rank one projection, it is
also called a pre-state. See the book of Meyer \cite{meyer} or Parthasarathy  \cite{krp6}
for more details regarding this approach. in this paper, we only consider Hilbert
spaces of finite dimensions. As a result, the operators involved are all matrices.
Physical transformations between two quantum systems are given by completely
positive trace-preserving transformations called quantum channels.
A quantum channel $\Phi: \mathcal{B}(H) \to  \mathcal{B}(H)$ can be represented
by a finite set of operators $\{V_1,V_2, \cdots, V_k\}$ acting on $H$ such that
for all $X \in \mathcal{B}(H)$
\begin{equation}\label{eq1.1}
	\Phi(X) = \sum_{j=1}^k V_j X V_j^\dag, \qquad  \text{ where }
	\sum_{j=1}^k V_j^\dag V_j=I.
\end{equation}
Here $I$ denotes the identity operator on $H$. The operators $V_j$ are called the
Kraus operators. (See chapter 3 from the book of  Bhatia \cite{bhatia1} for a  proof.
This representation is not unique.
However, a minimal representation with respect to the number of terms can be achieved.

\par The problem can be stated as follows: Let $A_1,\cdots, A_n$ be a set of
input states and $B_1, \cdots, B_n$ be a set of output states, both on
$\mathcal{B}(H)$. The task is to construct (if possible) a quantum channel $\Phi:
\mathcal{B}(H) \to \mathcal{B}(H)$ such that $\Phi(A_i) =B_i$ for each $i=1,\, 2,
\cdots, n.$ This can be considered as a version of the interpolation problem in
operator theory where an interpolating function is a completely positive trace-preserving map.  When $\dim H=2$ and $n=2$ the problem has been completely solved
by  Alberti and Uhlmann \cite{alberti-uhlmann-80}, with some more developments in
\cite{MR649773}. Further developments, in the sense of hypothesis testing, were done
by Jen\v{c}ov\'{a} \cite{MR2997404},\cite{MR2661520}. However, the most general form of the above problem remains elusive. The classical version of the problem was solved by Blackwell
\cite{MR56251}. For further developments in this direction see the paper of Ruch
et al \cite{MR586660} or the book of Torgersen \cite{MR1104437}. All of these
processes use various properties of matrix majorization.

\par The problem statement can be explained in the language of operator
systems. Consider the operator system $\mathcal{S}_A$ spanned by $A_i$'s and the
map $\phi:\mathcal{S}_A \to \mathcal{B}(H)$ such that $\phi(A_i)=B_i$. By
definition, this is a completely positive map. By Arveson's extension theorem,
this can be extended to a completely positive map $\Phi: \mathcal{B}(H) \to
\mathcal{B}(H)$ \cite{paulsen}. This extension is not unique and it is not clear
whether this extension is in a quantum channel or not. 

\par Li and Poon approached this problem from a matrix analysis point of
view \cite{MR2837768}.
They assumed that the input and output matrices are also commuting. As a result,
the problem is reduced to a problem on diagonal matrices, which can be approached by
standard methods.   This had been further
extended by Hsu et al \cite{hsu-jfa-14} where they considered only one input $A$
and only one output $B$ where the operators are either compact or of Schatten
$p$-class. Following techniques as in \cite{MR2837768}, they showed that the
result also holds for a class (compact or Schatten $p$) of input operators
$A_i$'s and corresponding output operators $B_i$'s where they are mutually
commuting.

\par From a quantum information perspective Heinosaari et al \cite{heinosaari-12}
applied semi-definite programming techniques to construct such channels. They also
asked a sequential version of the problem, i.e. to say whether there exists a
sequence of quantum channels $\Phi_j$ such that $\lim_{j\to \infty} \|\Phi_j(A_i)
-B_i\|=0$ for all $i=1, \cdots, n$. They observed that there exist cases when the sequential version of the problem has solutions whereas there does not exist
any such channel $\Phi$. Finally, Gour et al  \cite{Gour2018} showed several
equivalent statements of the problem connecting it with quantum thermodynamics.
Following the classical problem, which relies heavily on majorization, they
consider the problem as a non-commutative version of majorization. 

\par In \cite{MR3671511} Davidson et al approached interpolation problems from an
operator theoretic perspective. They did these works in the context of $d$-tuples
of operators on Complex Hilbert space and they used duality of the Hilbert space
most of their work. Basically, they associate with d-tuples of bounded operators
to the matrix range which is introduced by Arveson.
%reference?%
That is, a d-tuple of operators
can be mapped onto another d-tuple of operators by a UCP map. This result
connects the problem with the joint numerical range which was also initially noticed
in  \cite{hsu-jfa-14} in a very restricted class of operators. In  \cite{FB2891869} shows that for given state space input $\mathcal{A}$ and output $\mathcal{B}$ on $\mathcal{H}_1$ and $\mathcal{H}_2$ respectively with  $\mathcal{B}$ is commutative. If there exist a linear map $\Psi : \mathcal{L}( \mathcal{H}_1) \to \mathcal{L}( \mathcal{H}_2) $ such that $\Psi (A) \in \mathcal{B} $ for all $ A \in \mathcal{A}$ and dual  $\Psi^* $ map test output state to input state. Then there exist a completely positive trace preserving map $ \Phi : \mathcal{L}( \mathcal{H}_1) \to \mathcal{L}( \mathcal{H}_2) $ such that $\Psi (A) = \Phi (A)$ far all $ A \in \mathcal{A}$.

\par A quantum channel $\Phi$ is said to be
entanglement breaking if for any state $\rho$ acting on $H \otimes H$ the output
$(1 \otimes \Phi) (\rho)$ is a separable state. This has been introduced by
Horodecki et al \cite{MR2001114}. It can be shown that for such a map $\Phi$
there is a Kraus representation (which need not be optimal) such that each Kraus
operator is of rank 1. In this paper, we tried to address the following questions:
\begin{enumerate}
 \item First we have reconstructed the SDPs in \cite{heinosaari-12} so that
	 we can identify the EBT maps.
	
 \item	 Secondly we have solved the interpolation problem with entanglement breaking
	 map when initial states are orthogonal.
 \end{enumerate}
 We notice that Li and Du \cite{MR3336734} also addressed a similar problem.
 But in their case, there was one input and one output operator, and the separable
 Hilbert spaces were of any arbitrary dimensions. This paper considers the
 problem only on finite-dimensional Hilbert spaces.

%It might be better to use a dif and only iferent numbering scheme for the previous paragraph.% 

\section{Preliminary Concepts and Results}
In this section, we are going to list the preexisting results we are going to use in our discussion.\\

 Let $ \mathbb{M}_n$ be the set of all complex matrices and $ \mathbf{H}_n$ be the set of all Hermitian matrices in $ \mathbb{M}_n$. A matrix $A\in \mathbb{M}_n$ is said to be positive if   $ A \in \mathbf{H}_n$ and all of its eigenvalues are nonnegative. A linear map $ \Phi : \mathbb{M}_n \rightarrow  \mathbb{M}_m$ is positive if it maps positive matrices to positive matrices. Let, $\mathbb{M}_k(\mathbb{M}_m)$ be the space of all $k\times k$  block matrices $[[A_{ij}]]$ where $A_{ij} \in \mathbb{M}_m$. A linear map $ \Phi : \mathbb{M}_n \rightarrow  \mathbb{M}_m$ is completely positive if for each k, the map $  I  \otimes \Phi :\mathbb{M}_k(\mathbb{M}_n) \rightarrow \mathbb{M}_k(\mathbb{M}_m)$ defined by $ (I_k \otimes \Phi)(A_{ij})=[\Phi(A_{ij})] $ is positive.\\
 
 In the Introduction, we said the interpolation problem is a version of operator theory.
 Now we want to see for a given set of input states $\{A_1,A_2,\dots,A_n\}$ and output states $\{B_1,B_2,\dots,B_n\}$,
 if there exists a condition for the existence of a completely positive trace-preserving map $ \Phi: \mathbb{M}_n \rightarrow  \mathbb{M}_m$ such that $\Phi(A_i)=B_i$ for $i=1,2,\dots,n$.
 In particular, for $i=1$, that is for single matrices A and B, can we find a condition for the existence of completely positive trace-preserving maps such that $ \Phi(A)=B$?
 The following results by Li and Poon show that condition for the existence of a completely positive trace-preserving map for Hermitian matrices.\\

\begin{theorem}\label{th2.1}\cite{MR2837768}
Suppose  $A\in\mathbf{H}_n$ and $B\in\mathbf{H}_m$ have eigenvalues $a_1,\dots,a_n$ and $b_1,\dots,b_m$ respectively.
Let $ a=(a_1,\dots,a_n)$ and $b=(b_1\dots ,b_m)$.
The following conditions are equivalent:
\begin{enumerate}
 \item There is a completely positive map trace-preserving  $\Phi:\mathbb{M}_n\rightarrow\mathbb{M}_m$ such that $\Phi(A)=B$
 \item There exists an $n\times m$ row stochastic matrix $\mathbf{D}$ such that $b=a\mathbf{D}$.
\item We have $tr(A)=tr(B)$ and $\sum_{p=1}^n|a_p|\geq\sum_{q=1}^m|b_q|$.
 \end{enumerate}
 \end{theorem}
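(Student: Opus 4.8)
The plan is to first exploit the unitary invariance of the problem to reduce to diagonal matrices, and then prove the cycle $(1)\Rightarrow(2)\Rightarrow(3)$ together with the substantive direction $(3)\Rightarrow(2)$. Since $X\mapsto UXU^*$ is itself a completely positive trace-preserving map for any unitary $U$, and compositions of such maps are again completely positive trace-preserving, I would diagonalize $A=U\,\mathrm{diag}(a)\,U^*$ and $B=V\,\mathrm{diag}(b)\,V^*$ and replace a candidate $\Phi$ by $X\mapsto V^*\Phi(UXU^*)V$. This lets me assume without loss of generality that $A=\mathrm{diag}(a_1,\dots,a_n)$ and $B=\mathrm{diag}(b_1,\dots,b_m)$, so only the spectra are relevant.

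For $(1)\Rightarrow(2)$ I would take a Kraus representation $\Phi(X)=\sum_j V_jXV_j^*$ with $\sum_j V_j^*V_j=I$ as in \eqref{eq1.1} and read off the diagonal of $\Phi(A)=B$. Setting $\mathbf{D}_{pq}=\sum_j|(V_j)_{qp}|^2$ gives $b_q=\sum_p a_p\mathbf{D}_{pq}$, i.e.\ $b=a\mathbf{D}$, while the row sums $\sum_q\mathbf{D}_{pq}=(\sum_j V_j^*V_j)_{pp}=1$ show $\mathbf{D}$ is row stochastic. For the converse $(2)\Rightarrow(1)$ I would build the classical channel with Kraus operators $V_{pq}=\sqrt{\mathbf{D}_{pq}}\,e_qe_p^*$: trace preservation $\sum_{p,q}V_{pq}^*V_{pq}=\sum_p\bigl(\sum_q\mathbf{D}_{pq}\bigr)e_pe_p^*=I$ follows from row stochasticity, and a direct computation yields $\Phi(\mathrm{diag}(a))=\mathrm{diag}(a\mathbf{D})=\mathrm{diag}(b)$.

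The equivalence $(2)\Leftrightarrow(3)$ is purely a statement about vectors, and $(2)\Rightarrow(3)$ is immediate: summing $b=a\mathbf{D}$ over $q$ with $\sum_q\mathbf{D}_{pq}=1$ gives $\sum_q b_q=\sum_p a_p$, i.e.\ $\mathrm{tr}(B)=\mathrm{tr}(A)$, and the triangle inequality gives $\sum_q|b_q|\le\sum_{p,q}|a_p|\mathbf{D}_{pq}=\sum_p|a_p|$.

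The hard part will be $(3)\Rightarrow(2)$, the existence of the row stochastic $\mathbf{D}$. I would reformulate it geometrically: the attainable set $\{a\mathbf{D}:\mathbf{D}\text{ row stochastic}\}$ is the Minkowski sum $\sum_p a_p\Delta_m$ of scaled simplices $\Delta_m=\mathrm{conv}\{e_1,\dots,e_m\}$, a compact convex polytope, so by the supporting hyperplane theorem $b$ lies in it if and only if $u^Tb\le h(u)$ for every $u\in\mathbb{R}^m$, where the support function computes explicitly as $h(u)=A^+\max_q u_q-A^-\min_q u_q$ with $A^{+}=\sum_{a_p\ge0}a_p$ and $A^{-}=-\sum_{a_p<0}a_p$. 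Writing $P=\sum_p|a_p|=A^++A^-$ and $S=\mathrm{tr}(A)=A^+-A^-$, so that $A^{\pm}=(P\pm S)/2$, I would set $M=\max_q u_q$, $m_0=\min_q u_q$ and center $u_q=\tfrac{M+m_0}{2}+\tilde u_q$ with $|\tilde u_q|\le\tfrac{M-m_0}{2}$, obtaining $u^Tb=\tfrac{M+m_0}{2}\,\mathrm{tr}(B)+\sum_q\tilde u_qb_q\le\tfrac{M+m_0}{2}S+\tfrac{M-m_0}{2}\sum_q|b_q|$; the two hypotheses $S=\mathrm{tr}(A)=\mathrm{tr}(B)$ and $\sum_q|b_q|\le P$ then collapse the right-hand side to exactly $h(u)$, giving membership and hence the desired $\mathbf{D}$. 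I expect the bookkeeping with the positive and negative parts of $a$ to be the only delicate point; an alternative would be a direct transportation/Farkas-duality argument or a greedy induction that peels off one matched pair of coordinates at a time, but the support-function route seems cleanest.
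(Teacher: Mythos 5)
Your proof is correct, but there is an important caveat about the comparison you were asked to make: this paper contains no proof of Theorem \ref{th2.1} at all. It is quoted verbatim as a preliminary result and attributed to Li and Poon \cite{MR2837768}, so the only meaningful comparison is with that cited source, not with anything in the present paper. On its own merits your argument checks out completely: the reduction to diagonal $A$, $B$ is legitimate since $X\mapsto UXU^*$ is CPTP and CPTP maps compose; the matrix $\mathbf{D}_{pq}=\sum_j|(V_j)_{qp}|^2$ is row stochastic precisely because $\sum_j V_j^*V_j=I$, and reading off diagonal entries of $\Phi(\mathrm{diag}(a))=\mathrm{diag}(b)$ gives $b=a\mathbf{D}$; the classical channel with Kraus operators $V_{pq}=\sqrt{\mathbf{D}_{pq}}\,e_qe_p^*$ settles $(2)\Rightarrow(1)$; and the support-function computation for the hard direction is right — the attainable set is the compact convex Minkowski sum $\sum_p a_p\Delta_m$, its support function is $h(u)=A^+\max_q u_q-A^-\min_q u_q$, and your centering estimate uses the trace equality and the $\ell_1$ inequality exactly once each, the latter multiplied by the nonnegative weight $(M-m_0)/2$, so the inequality direction is safe. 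Relative to the original Li--Poon treatment, which is matrix-analytic and essentially constructive (in particular $(1)\Rightarrow(3)$ there follows in one line from the fact that a positive trace-preserving map is a trace-norm contraction on Hermitian matrices, $\|\Phi(A)\|_1\le\|A\|_1$), your route through the polytope $\{a\mathbf{D}:\mathbf{D}\text{ row stochastic}\}$ is non-constructive but conceptually cleaner: membership is certified by a support-function bound rather than by exhibiting $\mathbf{D}$ explicitly. If you want one small economy, prove $(1)\Rightarrow(3)$ directly via trace-norm contractivity and reserve the convex-geometric argument for $(3)\Rightarrow(2)$; your cycle is logically complete as written, but that rearrangement makes each implication use its most natural tool.
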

 
 After adding some condition on the matrices we can find the condition of the existence of a completely positive trace-preserving map for finite input and output matrices. More precisely, If we consider inputs $\{A_1,A_2,\dots,A_n\}$ and outputs $\{B_1,B_2,\dots,B_n\}$ are commuting family of Hermitian matrices. Then the existence of a completely positive trace-preserving for the above family of matrices such that $\Phi(A_i)=B_i$ for $i=1,2,\dots,n$ is as follows.
 
 \begin{theorem}\label{th2.2}\cite{MR2837768}
Let $\{A_1,\dots ,A_k\}$ in $\mathbf{H}_n$ and $\{B_1,\dots , B_k\}$ in $\mathbf{H}_m$ be two commuting families.
Then there exist  unitary matrices $U\in\mathbb{M}_n ~and~V\in\mathbb{M}_m$ such that $U^*A_iU$ and $VB_iV^*$ are diagonal matrix with diagonals $a_i=(a_{i1},\dots ,a_{in})$ and $b_i=(b_{i1},\dots ,b_{im})$ respectively, for i=1,...k. The following conditions are equivalent:

\begin{enumerate}
\item There is a completely positive map $\Phi : \mathbb{M}_n \rightarrow \mathbb{M}_m$ such that $\Phi  (A_i)=B_i $ for $i=1,\dots ,k$.
\item There is an $n\times m$ non-negative matrix $D=\begin{pmatrix}
d_{pq}
\end{pmatrix}_{pq}$ such that $(b_{ij})=(a_{ij})D$.
\end{enumerate}

\end{theorem}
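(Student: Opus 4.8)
The plan is to exploit the hypothesis that both families commute, so that a single unitary $U$ simultaneously diagonalises all the $A_i$ and a single unitary $V$ simultaneously diagonalises all the $B_i$, exactly as recorded in the statement. Write $e_1,\dots,e_n$ for the orthonormal eigenbasis attached to $U$ (the columns of $U$, so that $A_i e_p = a_{ip} e_p$) and $f_1,\dots,f_m$ for the eigenbasis attached to $V$ (the columns of $V^{*}$, so that $B_i f_q = b_{iq} f_q$). The whole point is that, once everything is diagonal, a completely positive interpolation collapses to a purely classical transition rule encoded by the nonnegative matrix $D$, so I only have to pass back and forth between a CP map and its ``diagonal-to-diagonal'' data. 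Note also that one direction will produce a genuine CP map while the other will use only positivity, which is why $D$ is merely nonnegative (row sums free) rather than stochastic.

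For the implication $(2)\Rightarrow(1)$ I would simply write down the interpolant. Given a nonnegative $D=(d_{pq})$, set $V_{pq}=\sqrt{d_{pq}}\,f_q e_p^{*}$ and define $\Phi(X)=\sum_{p,q} V_{pq} X V_{pq}^{*}=\sum_{p,q} d_{pq}\,(e_p^{*} X e_p)\, f_q f_q^{*}$. This is completely positive by construction, being presented in Kraus form. To check interpolation, use $A_i=\sum_p a_{ip} e_p e_p^{*}$, so that $e_p^{*} A_i e_p = a_{ip}$ and hence $\Phi(A_i)=\sum_q\big(\sum_p a_{ip} d_{pq}\big) f_q f_q^{*}=\sum_q b_{iq} f_q f_q^{*}=B_i$, the middle equality being exactly $(b_{ij})=(a_{ij})D$. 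This step is routine.

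For $(1)\Rightarrow(2)$ I would read the matrix $D$ off the given map. Define $d_{pq}=f_q^{*}\,\Phi(e_p e_p^{*})\,f_q$. Since $e_p e_p^{*}\ge 0$ and $\Phi$ is positive, $\Phi(e_p e_p^{*})\ge 0$, and therefore each $d_{pq}\ge 0$, so $D$ is nonnegative. Linearity of $\Phi$ then gives $\sum_p a_{ip} d_{pq}=f_q^{*}\,\Phi\big(\sum_p a_{ip} e_p e_p^{*}\big) f_q=f_q^{*}\,\Phi(A_i)\,f_q=f_q^{*} B_i f_q=b_{iq}$, which is precisely $(a_{ij})D=(b_{ij})$.

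I expect the real content to be conceptual rather than computational: the key observation is that for commuting inputs and outputs the only information a CP map can transmit is captured by the diagonal entries $f_q^{*}\Phi(e_p e_p^{*}) f_q$, and that these entries automatically assemble into a single nonnegative matrix that works simultaneously for every index $i$. The simultaneity across $i=1,\dots,k$ is what might look delicate, but it is free here because all the $A_i$ (respectively all the $B_i$) are expanded in one common eigenbasis, so the single linear relation $(b_{ij})=(a_{ij})D$ handles all $k$ constraints at once. The only points needing care are fixing a consistent convention for the two eigenbases and noting that mere positivity, not full complete positivity, already suffices for the extraction direction.
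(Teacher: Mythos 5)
Your proof is correct in both directions: the Kraus-form construction $\Phi(X)=\sum_{p,q} d_{pq}\,(e_p^{*}Xe_p)\,f_qf_q^{*}$ is completely positive and interpolates precisely when $(b_{ij})=(a_{ij})D$, and the extraction $d_{pq}=f_q^{*}\Phi(e_pe_p^{*})f_q$ is nonnegative by mere positivity and satisfies the required linear relation by linearity of $\Phi$; you also handled the asymmetric convention ($U^{*}A_iU$ versus $VB_iV^{*}$) correctly by taking the $f_q$ to be columns of $V^{*}$. One point of comparison, however: this paper does not prove the statement at all — it is quoted as a preliminary result with a citation to Li and Poon — so there is no internal proof to measure your argument against. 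Your argument is essentially the standard one from the cited source, and it is worth noting that the construction you use for $(2)\Rightarrow(1)$ is exactly the mechanism this paper itself exploits later, in the proof following Lemma~\ref{lm3.3}, where the interpolating map is written with rank-one Kraus operators $UF_jV$ whose entries are square roots $\sqrt{d_{pj}}$ of the entries of $D$; the only cosmetic difference is that there the $n$ Kraus operators each bundle a whole column of $\sqrt{D}$, whereas you use $nm$ rank-one operators $\sqrt{d_{pq}}\,f_qe_p^{*}$ — the two maps agree on all diagonal (hence on all the relevant) inputs. Your closing observation that positivity alone suffices for the extraction direction, while complete positivity is produced for free by the Kraus form in the other direction, is exactly the right way to see why the equivalence holds with $D$ merely nonnegative rather than stochastic.
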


 Now from the definition of entanglement breaking, we know that any output state of an entanglement breaking channel is separable.
 From \cite{H1}, we also know how to check the separability of a state.
 The following result shows how to characterize entanglement-breaking maps. 

\begin{theorem}\label{th2.3}
Let $\phi$ be a completely positive map on $\mathcal{B}(H_1)\otimes\mathcal{B}(H_2)$ where $H_1=H_2=\mathbb{C}^n$.
Then the following are equivalent:
\begin{enumerate}
    \item $\phi$ can be written as $\phi(\rho)=\sum_k R_ktr[F_k\rho]$\footnote{This form of $\phi$ is called Holevo form, see \cite{Holevo-coding}} with $F_k$ positive semi-definite for every $k$.
    \item $\phi$ is entanglement breaking.
    \item $ I \otimes\phi(|\beta\rangle\langle\beta|)$ is separable for any maximally entangled state $\beta=n^{-\frac{1}{2}}\sum_{j=1}^n |j\rangle\otimes|j\rangle$.
    \item $\phi$ can be written in operator sum form using only Kraus operators $\{A_l\}_l$ of rank one.
    \item $\gamma\circ\phi$ is completely positive for all positivity preserving maps $\gamma$.
    \item $\phi\circ\gamma$ is completely positive for all positivity preserving maps $\gamma$.
\end{enumerate}
A corresponding equivalence holds for CPT and EBT maps with the additional
conditions that $\sum_k F_k= I $, the Kraus operators $A_l$ satisfy $\sum_l A_lA_l^*= I $,
and $\gamma$ is trace-preserving.
\end{theorem}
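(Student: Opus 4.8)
The plan is to prove the six conditions equivalent by establishing a cycle of implications, with the Choi--Jamio{\l}kowski isomorphism serving as the bridge between the structural descriptions (1), (4) and the separability conditions (2), (3). Throughout I would write $C_\phi=(I\otimes\phi)(|\beta\rangle\langle\beta|)$ for the (normalised) Choi matrix of $\phi$, and use the two standard facts that $\phi$ is completely positive if and only if $C_\phi\ge 0$, and that $\phi$ is recovered from $C_\phi$ by the inverse isomorphism $\phi(\rho)=n\,\mathrm{Tr}_1[(\rho^{T}\otimes I)\,C_\phi]$.

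The equivalence (1)$\Leftrightarrow$(4) is purely algebraic. Given the Holevo form $\phi(\rho)=\sum_k R_k\,\mathrm{tr}[F_k\rho]$ with $R_k,F_k\ge 0$, I would spectrally decompose each factor as $R_k=\sum_i|u_{ki}\rangle\langle u_{ki}|$ and $F_k=\sum_j|v_{kj}\rangle\langle v_{kj}|$, and observe that the rank-one operators $A_l=|u_{ki}\rangle\langle v_{kj}|$ (indexed by $l=(k,i,j)$) satisfy $\sum_l A_l\rho A_l^*=\phi(\rho)$; conversely a rank-one Kraus operator $A_l=|u_l\rangle\langle v_l|$ gives $A_l\rho A_l^*=\langle v_l|\rho|v_l\rangle\,|u_l\rangle\langle u_l|$, which is exactly a Holevo term with $R_l=|u_l\rangle\langle u_l|$ and $F_l=|v_l\rangle\langle v_l|$. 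For (1)$\Rightarrow$(2) I would apply $I\otimes\phi$ to an arbitrary bipartite state $\rho$ in index notation and use the Holevo form to get $(I\otimes\phi)(\rho)=\sum_k M_k\otimes R_k$, where $M_k=\mathrm{Tr}_2[(I\otimes F_k)\rho]$ is the partial trace of the positive operator $(I\otimes\sqrt{F_k})\rho(I\otimes\sqrt{F_k})$ and hence positive; this exhibits the output as a separable state. Finally (2)$\Rightarrow$(3) is trivial, being the special case $\rho=|\beta\rangle\langle\beta|$.

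The substantive implication is (3)$\Rightarrow$(1). Condition (3) says precisely that $C_\phi$ is a separable state, so I would write a rank-one separable decomposition $C_\phi=\sum_k|x_k\rangle\langle x_k|\otimes|y_k\rangle\langle y_k|$, with the first leg carrying the input, and feed it through the inverse isomorphism above. Using $\langle x_k|\rho^{T}|x_k\rangle=\langle\bar x_k|\rho|\bar x_k\rangle$ this yields $\phi(\rho)=\sum_k\langle\bar x_k|\rho|\bar x_k\rangle\,|y_k\rangle\langle y_k|$, a Holevo form with $F_k=n|\bar x_k\rangle\langle\bar x_k|\ge 0$ and $R_k=|y_k\rangle\langle y_k|\ge 0$. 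I expect the only delicate point here to be the bookkeeping of the conjugation/transpose that the isomorphism introduces on the input leg, which must be handled so that the $F_k$ come out genuinely positive semi-definite. This closes the cycle (1)$\Rightarrow$(2)$\Rightarrow$(3)$\Rightarrow$(1) and, with (1)$\Leftrightarrow$(4), settles the first four conditions.

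For (5) and (6) I would invoke the Horodecki separability criterion from \cite{H1}: a bipartite state $\sigma$ is separable if and only if $(I\otimes\gamma)(\sigma)\ge 0$ for every positivity-preserving map $\gamma$. Applying this to $\sigma=C_\phi$ and using $(I\otimes\gamma)(I\otimes\phi)=(I\otimes\gamma\circ\phi)$ shows that $C_\phi$ is separable if and only if the Choi matrix of $\gamma\circ\phi$ is positive for all positive $\gamma$, i.e. if and only if $\gamma\circ\phi$ is completely positive for every such $\gamma$; this is exactly (3)$\Leftrightarrow$(5). The equivalence with (6) then follows by the symmetric argument applied to the adjoint, since $(\phi\circ\gamma)^*=\gamma^*\circ\phi^*$, positivity is preserved under taking adjoints, and $\phi$ is entanglement breaking if and only if its adjoint $\phi^*$ is (the Holevo form is self-dual). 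I anticipate the main obstacle to be (3)$\Rightarrow$(1)---controlling the conjugation in the inverse isomorphism---and, for the CPT/EBT addendum, tracking the normalisation through each step: demanding $\mathrm{tr}[\phi(\rho)]=\mathrm{tr}[\rho]$ forces $\sum_k F_k=I$ in the Holevo form and the stated normalisation $\sum_l A_lA_l^*=I$ on the rank-one Kraus operators, while $\gamma$ must be taken trace-preserving for the composition statements to respect traces.
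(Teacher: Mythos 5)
The paper itself gives no proof of this theorem: it is stated in the Preliminaries as a quoted result of Horodecki--Shor--Ruskai \cite{MR2001114}, so the only comparison available is with that cited source. Your argument is correct and follows essentially the same route as the original: (1)$\Leftrightarrow$(4) by spectral decomposition, the cycle (1)$\Rightarrow$(2)$\Rightarrow$(3)$\Rightarrow$(1) through the Choi--Jamio{\l}kowski isomorphism applied to a separable decomposition of the Choi matrix (with the conjugation on the input leg handled correctly, so the $F_k$ are genuinely positive), and (5)--(6) via the Horodecki separability criterion \cite{H1} combined with duality. The one blemish you inherit from the statement rather than introduce yourself: under the convention $\phi(\rho)=\sum_l A_l\rho A_l^*$, trace preservation forces $\sum_l A_l^*A_l=I$, not $\sum_l A_lA_l^*=I$ as written in the theorem.
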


Using these theorems we have found a completely positive trace-preserving map between two families of positive matrices where initial matrices are orthogonal and we have also found whether the map is entanglement breaking or not.

The first problem is inspired by the existence of a completely positive trace-preserving extension formulated as a semi-definite program in \cite{heinosaari-12}. 
 \\

Let $\{X_i\}_{i=1}^N$ and $\{Y_i\}_{i=1}^N$ be Hermitian matrices.
If linear map $\tilde{T}:\mathbb{M}_d\rightarrow \mathbb{M}_{d'}$ is the approximation of $T:\mathbb{M}_d\rightarrow \mathbb{M}_{d'}$ such that $T(X_i)=Y_i$. Define this $\Delta$ to be a measure of the  functional\\
 \begin{equation}\label{eq2.1}
\Delta(\tilde{T})=\sum_{i=1}^N\Vert\tilde{T}(X_i)-Y_i\Vert_1 
\end{equation}
which calculate a linear map $\tilde{T}$ is how much close to $ T$.\\

So,
\begin{equation}\label{eq2.2}
\Delta=\inf\{\Delta(\tilde{T})|\tilde{T}:\mathbb{M}_d\rightarrow\mathbb{M}_{d'}~ ~is~completely~~positive\} 
\end{equation}
calculates how well the linear map $T:\mathbb{M}_d\rightarrow \mathbb{M}_{d'}$ such that $T(X_i)=Y_i$ can be approximated by a completely positive map defined on $\mathbb{M}_d$.
 Heinosaari et al \cite{heinosaari-12} have identified all this as the following SDP problem.
 
\begin{theorem}\label{th2.4}
Let, $X_i\in\mathbb{M}_d$ and $Y_i\in\mathbb{M}_{d'}$ be Hermitian matrices.
The optimization \ref{eq2.1}-\ref{eq2.2} whose optimal value $\Delta$ quantifies how well the map $X_i\rightarrow Y_i$  can be approximated with a completely positive map on $\mathbb{M}_d$ can be formulated as an SDP.
Up to the negative sign, it is the dual of the following SDP\\
\begin{equation}\label{2.3}
\Gamma=\inf\sum_{i=1}^Ntr[Y_i^TH_i]
\end{equation}
Subject to constraints,
\begin{equation}\label{2.4}
\sum_{i=1}^NX_i\otimes H_i\geq0\\
~~and ~~\Vert H_i \Vert \leq 0 ~~\forall i
\end{equation}
where $H_i$'s are Hermitian matrices.
The optimal value $\Gamma$ of the SDP \ref{2.3} - \ref{2.4} is equal to $0$ if and only if the map $X_i\rightarrow Y_i $ is completely positive on $\mathcal{S}=span\{X_i\}$; The optimal values both of the SDP are related by $ \Delta = -\Gamma$. 
\end{theorem}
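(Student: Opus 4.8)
The plan is to realise both the primal functional and the claimed dual as the two sides of a single semidefinite minimax problem, obtained by combining the Choi--Jamio\l{}kowski picture of completely positive maps with the variational description of the trace norm. First I would parametrise the competitors $\tilde T$ by their Choi matrices $C_{\tilde T}=\sum_{k,l}E_{kl}\otimes\tilde T(E_{kl})$, so that the predicate ``$\tilde T$ completely positive'' becomes exactly ``$C_{\tilde T}\ge0$'', i.e.\ a constraint over the positive semidefinite cone. This turns the minimisation defining $\Delta$ into a minimisation of a function of $C$ over a convex cone, which is the natural primal object for an SDP.

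Next I would linearise each summand $\|\tilde T(X_i)-Y_i\|_1$. Since $X_i,Y_i$ are Hermitian and $C_{\tilde T}\ge0$ forces $\tilde T$ to be Hermiticity preserving, the matrix $\tilde T(X_i)-Y_i$ is Hermitian, and the duality between the trace norm and the operator norm gives $\|\tilde T(X_i)-Y_i\|_1=\max_{\|H_i\|\le1}\mathrm{tr}\!\big[H_i(\tilde T(X_i)-Y_i)\big]$, the maximum running over Hermitian $H_i$ in the operator-norm unit ball. The key algebraic identity is the Choi transfer formula $\mathrm{tr}[H_i\,\tilde T(X_i)]=\mathrm{tr}\big[(X_i^{T}\otimes H_i)\,C_{\tilde T}\big]$, which I would verify from $\tilde T(X)=\mathrm{tr}_1[(X^{T}\otimes I)C_{\tilde T}]$. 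Summing over $i$ rewrites $\Delta$ as $\inf_{C\ge0}\max_{\{\|H_i\|\le1\}}\big(\mathrm{tr}[(\sum_i X_i^{T}\otimes H_i)C]-\sum_i\mathrm{tr}[H_iY_i]\big)$.

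The crucial step --- and the one I expect to be the main obstacle --- is to interchange the infimum over $C\ge0$ with the maximum over the $H_i$. The payoff is bilinear in $C$ and in the tuple $(H_i)$, the cone $\{C\ge0\}$ is convex, and the feasible set $\{\|H_i\|\le1\}$ is convex and compact; this is precisely the setting of Sion's minimax theorem, so no duality gap arises and the order may be swapped (equivalently one could invoke strong SDP duality after exhibiting a Slater point). After swapping, the inner problem $\inf_{C\ge0}\mathrm{tr}[(\sum_i X_i^{T}\otimes H_i)C]$ evaluates to $0$ when $\sum_i X_i^{T}\otimes H_i\ge0$ and to $-\infty$ otherwise, which installs the positivity constraint. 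A cosmetic substitution $H_i\mapsto H_i^{T}$, using $M^{T}\ge0\Leftrightarrow M\ge0$ and $\mathrm{tr}[H_i^{T}Y_i]=\mathrm{tr}[Y_i^{T}H_i]$, converts the constraint into $\sum_i X_i\otimes H_i\ge0$ and the objective into $\sum_i\mathrm{tr}[Y_i^{T}H_i]$, yielding $\Delta=-\inf\sum_i\mathrm{tr}[Y_i^{T}H_i]=-\Gamma$ over the constraints $\|H_i\|\le1$ and $\sum_i X_i\otimes H_i\ge0$, which is the claimed dual.

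Finally I would read off the zero-value characterisation. Because $\Delta\ge0$ is an infimum of norms and $\Gamma=-\Delta$, we have $\Gamma=0$ iff $\Delta=0$. If the assignment $\phi:X_i\mapsto Y_i$ is completely positive on the operator system $\mathcal S=\mathrm{span}\{X_i\}$, then Arveson's extension theorem supplies a genuine completely positive $T$ on $\mathbb M_d$ with $T(X_i)=Y_i$, whence $\Delta=0$. Conversely, $\Delta=0$ produces CP maps $\tilde T_n$ with $\sum_i\|\tilde T_n(X_i)-Y_i\|_1\to0$; restricting each to $\mathcal S$ gives CP maps on $\mathcal S$ converging pointwise to $\phi$, and since the cone of CP maps out of the finite-dimensional system $\mathcal S$ is closed, $\phi$ itself is CP on $\mathcal S$. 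This settles the equivalence and completes the identification $\Delta=-\Gamma$.
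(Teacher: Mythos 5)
You should first note that this paper never proves Theorem~\ref{th2.4}: it is quoted as a preliminary result from Heinosaari et al.~\cite{heinosaari-12} (with a typo --- the constraint $\Vert H_i\Vert\le 0$ in~\ref{2.4} should read $\Vert H_i\Vert\le 1$, as you correctly assumed), so your attempt can only be measured against that source. Your duality derivation is sound and is essentially the standard one: Choi parametrisation of CP maps, the variational formula $\Vert A\Vert_1=\max_{\Vert H\Vert\le1}\mathrm{tr}[HA]$ for Hermitian $A$, the transfer identity $\mathrm{tr}[H_i\tilde T(X_i)]=\mathrm{tr}[(X_i^T\otimes H_i)C_{\tilde T}]$, a minimax interchange legitimised by compactness of the operator-norm balls (or strong SDP duality), evaluation of $\inf_{C\ge0}\mathrm{tr}[MC]\in\{0,-\infty\}$, and the transpose relabelling.

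The genuine gap is in the forward direction of your zero-value characterisation, where you invoke Arveson's extension theorem. Arveson's theorem applies to operator systems, i.e.\ self-adjoint subspaces \emph{containing the identity}, and $\mathcal S=\mathrm{span}\{X_i\}$ need not contain $I$; worse, the intermediate claim you want from it --- that CP on $\mathcal S$ yields an exact CP map $T$ on $\mathbb M_d$ with $T(X_i)=Y_i$ --- is false in general, so the step cannot be repaired by a better citation. Take $d=d'=2$, $X_1=E_{11}$, $X_2=E_{12}+E_{21}$, $Y_1=0$, $Y_2=X_2$. Every Hermitian element of $M_k\otimes\mathcal S$ is, up to the flip, of the block form $\left(\begin{smallmatrix}A&B\\B&0\end{smallmatrix}\right)$, and positivity forces $B=0$, $A\ge0$; hence the map $X_i\mapsto Y_i$ is CP on $\mathcal S$. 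Moreover $\Delta=0$, since the Choi matrices $C_\epsilon=\left(\begin{smallmatrix}\epsilon I&E_{12}\\E_{21}&\epsilon^{-1}I\end{smallmatrix}\right)\ge0$ define CP maps $T_\epsilon$ with $T_\epsilon(X_1)=\epsilon I\to0$ and $T_\epsilon(X_2)=Y_2$ exactly. Yet no exact extension exists: a PSD Choi matrix whose $(1,1)$ block vanishes has vanishing off-diagonal blocks, so $T(E_{11})=0$ forces $T(X_2)=0\neq Y_2$. This failure of closedness is precisely why the theorem is stated for the approximation quantity $\Delta$ rather than for exact interpolation. The correct argument needs no extension theorem: if $\phi$ is CP on $\mathcal S$ and $(H_i)$ is dual feasible, then $\sum_iX_i\otimes H_i$ is a positive element of $\mathcal S\otimes M_{d'}$, so $(\phi\otimes\mathrm{id})\bigl(\sum_iX_i\otimes H_i\bigr)=\sum_iY_i\otimes H_i\ge0$, and pairing with the rank-one positive matrix $\Omega=\sum_{k,l}E_{kl}\otimes E_{kl}$ gives $\sum_i\mathrm{tr}[Y_i^TH_i]=\mathrm{tr}\bigl[\bigl(\sum_iY_i\otimes H_i\bigr)\Omega\bigr]\ge0$, whence $\Gamma\ge0$; since $H_i=0$ is feasible, $\Gamma=0$, and your duality $\Delta=-\Gamma$ finishes. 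Your converse direction --- restrictions of CP maps are CP on $\mathcal S$, and that cone is closed under pointwise limits --- is correct.
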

In the paper, \cite{heinosaari-12} the existence of a completely positive trace-preserving extension is formulated as the following two semi-definite programs.\\

The first one is:

\begin{equation}\label{eq2.5}
\Delta_1=\inf\{\omega\lambda+\sum_i tr[P_i+Q_i]\} 
\end{equation}
Subject to
\begin{equation}\label{eq2.6}
C,P_i,Q_i\geq0 
\end{equation}
\begin{equation}\label{eq2.7}
P_i-Q_i=tr_2[C( I \otimes X_i^T)]-Y_i~~\forall i 
\end{equation}
\begin{equation}\label{eq2.8}
-\lambda I \leq tr_1[C]- I \leq\lambda I 
\end{equation}

The second one is:
\begin{equation}\label{eq2.9}
\Delta_2=\inf\{\sum_i tr[P_i+Q_i]\} 
\end{equation}
Subject to
\begin{equation}\label{eq2.10}
C,P_i,Q_i\geq 0 
\end{equation}
\begin{equation}\label{eq2.11}
P_i-Q_i = tr_2[C( I \otimes X_i^T)]-Y_i~~\forall i 
\end{equation}
\begin{equation}\label{eq2.12}
tr_1[C] =  I 
\end{equation}
 
We have modified these SDPs in the subsequent sections to identify EBT maps.

\section{An Optimization for Identifying EBT Maps}
\subsection{A Finite Sequence of Abstract Convex Optimizations}

In this section, our goal is to reconstruct the SDP's \ref{eq2.5}-\ref{eq2.8} and \ref{eq2.9}-\ref{eq2.12} so that we can identify the EBT maps.\\

We know from theorem \ref{th2.3} a map $T$ is EBT if and only if its corresponding Choi matrix $C$ is separable.
Using theorem 1, of \cite{H1} we can identify all separable states by an abstract convex optimization problem:

\begin{equation}\label{eq4.1}
\Lambda_C=\inf\{ tr[AC]\}
\end{equation}
 subject to
\begin{equation}\label{eq4.2}
tr[AR\otimes S]\geq 0 
\end{equation}
  for all projection matrices $R$ and $S$.\\

Optimization \ref{eq4.1}-\ref{eq4.2} is a convex optimization because the objective function $f_0(A)=tr[AC]$ and the constraint functions $f_{R,S}(A)=tr[AR\otimes S]$ are linear functions over a convex set.\\

Then again the constraint \ref{eq4.2} in the optimization \ref{eq4.1}-\ref{eq4.2} can be reformulated as the following abstract convex optimization problem:

\begin{equation}\label{eq4.3}
\Gamma_A=\inf\{ tr[AR\otimes S]\}
\end{equation}
subject to
\begin{equation}\label{4.4}
R^2 = R
\end{equation}

\begin{equation}\label{eq4.5}
S^2 = S
\end{equation}\\
Hence the optimization can be summarised as the following:\\

First, optimize $\Lambda_C$ over all $A$ for which $\Gamma_A\geq0$.
Then optimize $\Delta_i$ over all $C$ for which $\Lambda_C\geq0$.
% If we want to represent the optimization (11)-(12) as an ideal convex optimization problem, let us choose $f_0(R,S)=tr[AR\otimes S]$, the constraint function $f(R,S)=(R^2,S^2)$ and the constraint $f(R,S)=(R,S)$. Clearly $f_0$ and $f$ is convex.%

\subsection{a cone program similar to an SDP}
In this section, we are going to use the definition of separability to construct a cone program that is almost an SDP(i.e just like an SDP, for this optimization the constraint functions run over a smaller convex cone of positive semi-definite matrices) that will identify the EBT maps interpolating between Hermitian $\{ X_i\}_i$ and $\{ Y_i\}_i$.\\

Before that, we have to note that in the SDP's \ref{eq2.5}-\ref{eq2.8} and \ref{eq2.9}-\ref{eq2.12}, $C$ is running over the convex cone of all positive semi-definite matrices.
On the other hand, this $C$ turns out to be the Choi matrix of some completely positive map.
For a quantum channel $\phi$, the corresponding Choi matrix $C_\phi$ must be a state in $\mathbb{M}_{n^2}$.
In spite of this fact, in the SDP's \ref{eq2.5}-\ref{eq2.8} and \ref{eq2.9}-\ref{eq2.12} $C$ were taken over all positive semi-definite matrices.\\

This can only mean that, if $\Delta_i=0$ then the minimum value of $w\lambda+\sum_{i}tr[P_i+Q_i]$ must be obtained for some $C$ with trace $1$.\\

\begin{lemma}\label{lm4.1}
The property "\textbf{if $\Delta_i=0$ then the minimum value of $w\lambda+\sum_{i}tr[P_i+Q_i]$ must be obtained for some $C$ with trace $1$}"will hold if we define an optimization similar to \ref{eq2.5}-\ref{eq2.8} where $C$ is in a subset $\mathcal{S}$ of positive semi-definite matrices instead of the full set and the same holds for \ref{eq2.9}-\ref{eq2.12}.
\end{lemma}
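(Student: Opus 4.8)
The plan is to argue that the ``trace-one'' conclusion is forced entirely by the constraint structure together with the positivity requirements, and is therefore insensitive to whether $C$ ranges over the full cone of positive semi-definite matrices or over a subset $\mathcal{S}$ of it. In other words, passing from the full cone to a subset $\mathcal{S}$ only shrinks the feasible set and never weakens the chain of implications that produces a genuine Choi matrix at optimality, so the property transfers with no change in the reasoning.

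First I would analyse the program \ref{eq2.5}--\ref{eq2.8}. On its feasible set every term of the objective is nonnegative: the sandwich constraint \ref{eq2.8}, namely $-\lambda I\leq tr_1[C]-I\leq\lambda I$, forces $\lambda\geq 0$ (adding the two operator inequalities gives $2\lambda I\geq 0$); the constraint \ref{eq2.6} gives $P_i,Q_i\geq 0$, whence $tr[P_i+Q_i]\geq 0$; and $\omega$ is a fixed positive weight. Thus $\omega\lambda+\sum_i tr[P_i+Q_i]\geq 0$ and $\Delta_1\geq 0$. If the value $\Delta_1=0$ is attained, then each nonnegative summand must vanish, forcing $\lambda=0$ and, by positivity, $P_i=Q_i=0$ for every $i$. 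Substituting $\lambda=0$ into \ref{eq2.8} yields $tr_1[C]=I$, so that $tr[C]=tr[tr_1[C]]=tr[I]$ is pinned to a fixed constant; hence $C$ is a positive multiple of a state, i.e.\ a valid Choi matrix. Simultaneously $P_i=Q_i=0$ turns \ref{eq2.7} into the exact interpolation $tr_2[C(I\otimes X_i^T)]=Y_i$, so $C$ is the Choi matrix of a trace-preserving map sending $X_i$ to $Y_i$.

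The point I would stress is that nothing in this derivation used that $C$ ranges over the whole positive cone; it used only \ref{eq2.6}, \ref{eq2.8}, and the sign of $\omega$. Consequently, replacing the range of $C$ by any subset $\mathcal{S}$ of the positive semi-definite matrices leaves every step valid, and the implication ``$\Delta_1=0\Rightarrow C$ is a state'' holds verbatim for the restricted program. For the companion program \ref{eq2.9}--\ref{eq2.12} the conclusion is even more direct, since trace preservation $tr_1[C]=I$ is imposed as the hard equality \ref{eq2.12}; every feasible $C$ then already has the trace of a state, again independently of $\mathcal{S}$.

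The main obstacle is not the algebra but two bookkeeping matters. First, the phrase ``the minimum is obtained for some $C$'' presupposes that the infimum is attained; to secure this I would take $\mathcal{S}$ to be closed, so that its intersection with the bounded feasible region is compact and the continuous objective attains its minimum on it. Second, I would fix a single normalisation of the Choi correspondence throughout (for instance via the normalised maximally entangled state $\beta=n^{-\frac{1}{2}}\sum_{j}|j\rangle\otimes|j\rangle$ of Theorem \ref{th2.3}), so that $tr_1[C]=I$ literally reads off as ``trace one'' for the associated state. I emphasise that what is being claimed is only the conditional statement ``if $\Delta_i=0$ then the optimiser is a state''; whether restricting to $\mathcal{S}$ still permits $\Delta_i=0$ to be achieved is a separate question, to be taken up when $\mathcal{S}$ is specialised to the separable cone in order to single out the EBT maps.
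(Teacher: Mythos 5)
Your proof is correct, but it takes a genuinely different route from the paper's. The paper argues by contradiction via monotonicity of infima: writing $\Delta_S$ for the restricted optimum, it notes $0\leq\Delta_1\leq\Delta_S$ (minimizing over a subset can only increase the infimum), assumes a minimizer $C\in\mathcal{S}$ with $\Delta_S=0$ and $tr[C]\neq 1$, concludes $\Delta_1=0$ at that same $C$, and declares this a contradiction with the property of the full SDP, which is invoked as a black box. You instead prove the property directly from the constraint structure: on the feasible set, \ref{eq2.8} forces $\lambda\geq 0$ and \ref{eq2.6} forces $tr[P_i+Q_i]\geq 0$, so a zero value forces $\lambda=0$ and $P_i=Q_i=0$, whence $tr_1[C]=I$ and exact interpolation in \ref{eq2.7}; none of this refers to the range of $C$, so it survives restriction to any $\mathcal{S}$ verbatim. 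Your version buys something real: the paper's contradiction only closes if one reads the quoted property in the strong form ``\emph{every} $C$ achieving value $0$ is a state,'' whereas as literally stated (``the minimum must be obtained for \emph{some} $C$ with trace $1$'') the existence of a bad minimizer in $\mathcal{S}$ does not contradict the existence of a good one outside $\mathcal{S}$. Your direct argument is precisely a proof of that strong form, so it fills this gap and makes the lemma self-contained rather than dependent on the Heinosaari et al.\ result. Your two bookkeeping points (attainment via closedness of $\mathcal{S}$, and fixing one Choi normalization so that $tr_1[C]=I$ corresponds to the trace of a state) address genuine loosenesses that the paper leaves implicit; the paper's treatment of the second program \ref{eq2.9}--\ref{eq2.12} is also only the remark that ``the proof is similar,'' while your observation that the hard constraint \ref{eq2.12} makes the claim trivial there is the cleaner way to dispose of it.
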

\begin{proof}
The proof follows from the fact that the minimum of a function over a subset is always greater than or equal to the minimum over the full set.\\

Let $\Delta_S=inf\{ w\lambda+\sum_i tr[P_i+Q_i]\}$ subject to $C\in\mathcal{S}$.
Then $\Delta_1\leq\Delta_S$.\\

Now, let us assume the claim in the lemma is not true. Then there is some $C\in\mathcal{S}$ with $tr[C]\neq 1$ and $\Delta_S=0$ for $C$.\\

That means $0\leq\Delta_1\leq\Delta_S=0\Rightarrow\Delta_1=0$ for $C$.
Hence the optimization identifies a completely positive map which is not a quantum channel.
This is a contradiction.\\

The proof for SDP \ref{eq2.9}-\ref{eq2.12} is similar.
\end{proof}

Keeping this property in mind let us define $\mathcal{C}=\{\sum_{i=1}^n p_iA_i\otimes B_i:p_i\geq 0,A_i$ and $B_i$ are states in $\mathbb{M}_n\forall i,\}$.
\begin{theorem}\label{th4.2}
$\mathcal{C}$ is a convex cone of positive semi-definite matrices.
\end{theorem}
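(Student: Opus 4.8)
The plan is to verify directly the three defining properties packaged into the statement: that every element of $\mathcal{C}$ is positive semi-definite, that $\mathcal{C}$ is closed under multiplication by non-negative scalars, and that $\mathcal{C}$ is closed under addition. The last two together give convexity, since for a set that is already stable under non-negative scaling, convexity is equivalent to stability under sums (any convex combination $tX+(1-t)Y$ is then a sum of two scaled elements).

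First I would establish positivity, which rests on the elementary fact that the tensor product of two positive semi-definite matrices is again positive semi-definite. Writing the spectral decompositions of the states $A_i = \sum_a \lambda_a |u_a\rangle\langle u_a|$ and $B_i = \sum_b \mu_b |v_b\rangle\langle v_b|$ with $\lambda_a,\mu_b \geq 0$ gives $A_i \otimes B_i = \sum_{a,b} \lambda_a \mu_b\, |u_a \otimes v_b\rangle\langle u_a \otimes v_b| \geq 0$, so each $A_i \otimes B_i$ is positive semi-definite. Since $p_i \geq 0$, every summand $p_i A_i \otimes B_i$ is positive semi-definite, and a finite sum of positive semi-definite matrices is positive semi-definite; hence every element of $\mathcal{C}$ lies in the positive semi-definite cone.

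Next, for the cone and addition properties, take $X = \sum_i p_i A_i \otimes B_i$ and $Y = \sum_j q_j C_j \otimes D_j$ in $\mathcal{C}$ and $\lambda \geq 0$. Then $\lambda X = \sum_i (\lambda p_i) A_i \otimes B_i$ with $\lambda p_i \geq 0$ and the same states, so $\lambda X \in \mathcal{C}$; and $X + Y$ is again a non-negative combination of terms of the form (state) $\otimes$ (state), obtained simply by concatenating the two generating lists, so $X + Y \in \mathcal{C}$. Combining scaling with addition shows $tX + (1-t)Y \in \mathcal{C}$ for $t \in [0,1]$, giving convexity.

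The one point that requires care — and which I expect to be the main, albeit minor, obstacle — is the bookkeeping of the number of summands. As written, an element of $\mathcal{C}$ is a sum of exactly $n$ terms, yet concatenating two such sums produces $2n$ terms, so closure under addition forces the index bound in the definition to be read as an arbitrary finite length rather than a fixed one. This is the natural reading, since $\mathcal{C}$ is precisely the cone of unnormalised separable operators, and once it is adopted the verifications above go through verbatim. I would record this explicitly so that the later use of $\mathcal{C}$ as a genuine convex cone is unambiguous.
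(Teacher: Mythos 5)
Your proof is correct and follows the same direct-verification route as the paper: the paper likewise establishes convexity by concatenating the two generating sums (with a bracketed remark about padding the shorter list with zero terms) and establishes the cone property by absorbing a non-negative scalar into the coefficients $p_i$. The difference is that you do strictly more, in two places where the paper is incomplete. First, the paper never verifies that elements of $\mathcal{C}$ are positive semi-definite --- the phrase ``of positive semi-definite matrices'' in the statement is simply not addressed in its proof --- whereas you supply the spectral argument showing $A_i\otimes B_i\geq 0$ for states $A_i,B_i$, hence every non-negative combination is positive semi-definite. Second, you correctly isolate the bookkeeping wrinkle: as literally defined, an element of $\mathcal{C}$ is a sum of exactly $n$ terms, and concatenation produces $2n$ terms; the paper's zero-padding remark only equalizes the lengths of the two inputs $C_1$ and $C_2$ and does not cure the doubling, so your re-reading of the definition as allowing sums of arbitrary finite length (i.e.\ $\mathcal{C}$ is the cone of unnormalized separable operators) is the right repair, and it is needed to make the paper's own argument literally correct as well. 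Finally, you derive convexity from closure under addition together with non-negative scaling rather than checking a convex combination directly; this is equivalent and arguably cleaner, since both closure properties must be verified anyway for $\mathcal{C}$ to be a convex cone.
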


\begin{proof}
Let's choose $C_1,C_2\in\mathcal{C}$ and $\lambda\in(0,1)$ arbitrarily.
Let, $C_j=\sum_{i=1}^n p_{ji}A_{ji}\otimes B_{ji}$ for $i=1,2$.[Even if $C_1$ and $C_2$ have different number of components, we can adjust this by adding sufficiently many zeroes to the element with fewer components.]\\

Then, $\lambda C_1+(1-\lambda)C_2=\sum_{i=1}^n\lambda p_{1i}A_{1i}\otimes B_{1i}+\sum_{k=1}^n(1-\lambda)p_{2k}A_{2k}\otimes B_{2k}\in\mathcal{C}$.\\

Hence $\mathcal{C}$ is convex.\\

Let $r\in\mathbb{R}^+$ be chosen arbitrarily.
Clearly $rC\in\mathcal{C}$ for any $C\in\mathcal{C}$.\\

Hence $\mathcal{C}$ is a cone.
\end{proof}

Now we are in a position to construct our cone program.
We define the Optimization $\mathcal{O}$ as the following:

\begin{equation}\label{eq4.6}
\Delta =\inf\{ w\lambda+\sum_i tr[P_i+Q_i]\},
\end{equation}

subject to,
\begin{equation}\label{eq4.7}
P_i,Q_i\geq 0,
\end{equation}
\begin{equation}\label{eq4.8}
P_i-Q_i =tr_2[C( I \otimes X_i^T)]-Y_i\forall i,
\end{equation}
\begin{equation}\label{eq4.9}
C\in\mathcal{C},
\end{equation}
\begin{equation}\label{eq4.10}
    -\lambda I \leq tr_1[C]- I \leq\lambda I 
\end{equation}\\

There exists an interpolating EBT if and only if $\Delta=0$.

\section{An optimization problem to identify Interpolating degradable channels}

A Channel $\Phi : \mathcal{S}(\mathcal{H}_1) \rightarrow \mathcal{S}(\mathcal{H}_2) $ is said to be degradable if there exists a Completely positive trace preserving map  $\Psi :\mathcal{S}(\mathcal{H}_2) \rightarrow \mathcal{S}(\mathcal{E}) $ such that $\Phi^C = \Psi\circ \Phi $ . Where $\Phi^C$ is a complementary channel and $\mathcal{S}(\mathcal{H}) $ = State space on $\mathcal{H}$, E is the environment.

\par A channel is  $\Phi : \mathcal{S}(\mathcal{H}_1) \rightarrow \mathcal{S}(\mathcal{H}_2) $ is said to be $\epsilon $ - degradable if there exists a channel $\Psi :\mathcal{S}(\mathcal{H}_2) \rightarrow \mathcal{S}(\mathcal{E}) $ such that 
\begin{equation}
||\Phi^C - \Psi\circ \Phi ||_\diamond \leq \epsilon
\end{equation}
Where $|| \Phi ||_\diamond = || \Phi \otimes \mathcal{I}_{\mathcal{H}_1} ||_1 $, $ ||A||_1=tr\sqrt{A^*A}$

Watrous \cite{JhonW2009} proved that for two-channel $\Phi_1 , \Phi_2 : \mathcal{S}(\mathcal{H}_1) \rightarrow \mathcal{S}(\mathcal{H}_2) $ the diamond norm of their difference  $||\Phi_1 - \Phi_2 ||_\diamond $ can be expressed as a semi-definite programming 
\begin{equation}
||\Phi_1 - \Phi_2 ||_\diamond = 2 \inf_{Z} || tr_{2}(Z)||_\infty
\end{equation}
subject to
\begin{equation}
Z\geq C_{\Phi_{1}}- C_{\Phi_{2}}
\end{equation}
\begin{equation}
Z\geq 0
\end{equation}

Where $C_{\Phi_{1}- \Phi_{2}}$ is the Choi-Jamiolokowski representation of $\Phi_{1}- \Phi_{2}$.

\par From the paper \cite{David2017}  Approximate degradable Quantum channel
\begin{eqnarray*}
\epsilon_{\Phi}=\inf_{\Psi}||\Phi^C - \Psi\circ \Phi||_{\diamond}
\end{eqnarray*}
Subject to $\Psi :\mathcal{S}(\mathcal{H}_2) \rightarrow \mathcal{S}(\mathcal{E}) $ is Completely positive trace preserving.
\par So $\epsilon_{\Phi}$ can be written as
\begin{equation}
\epsilon_{\Phi}=2\inf_{\Psi}\inf_{Z}||tr_{E}(Z)||_\infty
\end{equation} 
Subject to,
\begin{gather*}
    Z\geq C_{\Phi^{C}}- C_{\Psi\circ \Phi}\\
Z\geq 0\\
C_{\Psi}\geq 0\\
tr_{E}(C_\Psi)=\mathcal{I}_{\mathcal{H}_2}
\end{gather*}

\par Now we can construct the above semi-definite programming so that the programming can identify Interpolating degradable maps.
\par If $X_i$ and $Y_i$ are input and output states then to find an Interpolating map that is degradable given by the following semi-definite programming,
\begin{equation}
\epsilon_{\Phi}=2\inf_{C_\Phi,Z}||tr_{E}(Z)||_\infty
\end{equation}
subject to,
\begin{gather*}
    Z \geq C_{\Phi^C} - C_{\Psi\circ \Phi}\\
    Z\geq 0\\
    C_{\Psi} \geq 0\\
    tr_{E}(C_\Psi)=tr_{\mathcal{H}_1}(C_\Phi)=\mathcal{I}_{\mathcal{H}_2}\\
tr_{\mathcal{H}_2}(C_\Phi(\mathcal{I}\otimes {X_i}^T))=Y_i \quad \forall \;i = 1,2\cdots ,n
\end{gather*}

\section{An optimization problem to identify random unitary channels}
This section will construct another optimization problem similar to \ref{eq4.7}-\ref{eq4.10} to identify random unitary channels instead of EBT channels.\\

Our first step will be to characterise the Choi matrix of a random unitary channel.
Let $T$ be a random unitary channel.
Then $T(A)=\sum_{i=1}^n p_iU_i^*AU_i$ where $U_i$ is unitary and $p_i>0$ for each $i$ and $\sum_{i=1}^n p_i=1$.\\

Then the Choi matrix of $T$ is:
\begin{eqnarray*}\label{eq5.1}
    \begin{split}
        C_T =\begin{pmatrix}
\begin{pmatrix}
T(E_{ij})
\end{pmatrix}
\end{pmatrix}_{ij}\\
=\begin{pmatrix}
\begin{pmatrix}
\sum_{k=1}^np_kU^*_kE_{ij}U_k
\end{pmatrix}
\end{pmatrix}_{ij}\\
=\sum_{k=1}^np_k\begin{pmatrix}
\begin{pmatrix}
U^*_kE_{ij}U_k
\end{pmatrix}
\end{pmatrix}_{ij}
    \end{split}
\end{eqnarray*}\\

Hence from above equation , we can conclude that the set of Choi matrices of random unitary channels are $\mathbb{C}_U=\{\sum_{k=1}^np_k\begin{pmatrix}
\begin{pmatrix}
U^*_kE_{ij}U_k
\end{pmatrix}
\end{pmatrix}_{ij}:U_k$ is unitary matrix in $\mathbb{M}_n$, $p_k>0$ for all $k$ and $\sum_{k=1}^np_k=1\} $.\\

As explained in section \S3.2, if the $\Delta_1$ in optimization \ref{eq2.5}-\ref{eq2.8} is zero, then the minimum value must have been obtained for some state $C$ and this is also true when $C$ is taken from a subset of positive semi-definite matrices instead of the full set.\\

Hence let us define, $\mathcal{C}_U=\{\sum_{k=1}^np_k\begin{pmatrix}
\begin{pmatrix}
U^*_kE_{ij}U_k
\end{pmatrix}
\end{pmatrix}_{ij}:U_k$ is unitary matrix in $\mathbb{M}_n$, $p_k>0$ for all $k\}$\\

Now the construction of the optimization problem will follow similar steps as in section \S3.2:

\begin{equation}\label{eq5.2}
\Delta  =\inf\{ w\lambda+\sum_i tr[P_i+Q_i]\},
\end{equation}

Subject to,
\begin{gather*}
    P_i,Q_i\geq 0,\\
     P_i-Q_i =tr_2[C( I \otimes X_i^T)]-Y_i \quad \forall \; i,\\
      C\in\mathcal{C}_U\\
      -\lambda I \leq tr_1[C]- I \leq\lambda I 
\end{gather*}

\section{A generalization of The Previous Discussions}
In our previous discussions, we reconstructed the problem of interpolating EBT channels between two sets of positive matrices as convex optimization problems.
In subsection \S3.2 we were even able to construct a cone program that is almost an SDP in the sense that the constraint functions run over a smaller convex cone of positive matrices rather than over the entire set of positive matrices.
In section \S5 we did a similar treatment for random unitary channels.
Let us denote this type of optimization as semi-SDP.

\begin{defn}
A cone program $\mathcal{K}$ is a semi-SDP if its constraint functions run over a convex cone of positive matrices.
\end{defn}

Now we will see how this methodology can be used in a more general case.
By general case what we mean is that instead of EBT or unitary channels, we are going to impose an arbitrary property $\mathcal{P}$ on the set of quantum channels on $\mathbb{M}_n$ and try to check if there exists a quantum channel satisfying $\mathcal{P}$ and interpolating between two given sets of positive matrices.\\

Let us denote, $\mathbb{P}=\{ T:T$ is a quantum channel satisfying $\mathcal{P}\}$ as the corresponding set of $\mathcal{P}$.
We say $\mathcal{P}$ is convex if and only if $\mathbb{P}$ is convex.
We will show that for any convex $\mathcal{P}$ the problem mentioned before can be represented as a semi-SDP.
By the one-on-one correspondence between the set of all completely positive maps and their corresponding Choi matrices $\mathbb{P}$ is convex if and only if the set of its corresponding Choi matrices is convex.
Therefore in our further discussion, we shall use the same notation $\mathbb{P}$
to denote the set of all quantum channels satisfying $\mathcal{P}$ as well as the set of Choi matrices corresponding to the former set.
\begin{defn}
Let $\mathbb{P}$ be the corresponding set of some property $\mathcal{P}$.
We define, 
\begin{eqnarray*}
    \mathbf{K}(\mathbb{P})=\left \{\sum_{i=1}^n p_iP_i: p_i\geq0 \text{ and } P_i\in\mathbb{P} \quad\forall \; i \right\}
\end{eqnarray*}

\end{defn}

\begin{theorem}\label{th6.2}
For any $\mathbb{P}$, $\mathbf{K}(\mathbb{P})$ is a convex cone.
\end{theorem}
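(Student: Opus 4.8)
The plan is to mirror the argument used for Theorem \ref{th4.2}, since $\mathbf{K}(\mathbb{P})$ is by construction the conic hull of $\mathbb{P}$, and the conic hull of any set is automatically a convex cone. Concretely, I would verify the two defining closure properties of a convex cone directly: closure under nonnegative scaling, and closure under taking convex combinations. No structural property of $\mathbb{P}$ beyond membership is needed, so the argument works for an arbitrary $\mathbb{P}$ (in particular it does not even require $\mathbb{P}$ itself to be convex).

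First, for the cone property, I would pick an arbitrary $C = \sum_i p_i P_i \in \mathbf{K}(\mathbb{P})$ with $p_i \geq 0$ and $P_i \in \mathbb{P}$, together with an arbitrary scalar $r \geq 0$. Then $rC = \sum_i (r p_i) P_i$, and since $r p_i \geq 0$ for every $i$, this is again a nonnegative combination of elements of $\mathbb{P}$, hence lies in $\mathbf{K}(\mathbb{P})$.

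Second, for convexity, I would take $C_1 = \sum_i p_i P_i$ and $C_2 = \sum_j q_j Q_j$ in $\mathbf{K}(\mathbb{P})$ and $\lambda \in (0,1)$, and compute
\[
\lambda C_1 + (1-\lambda) C_2 = \sum_i (\lambda p_i) P_i + \sum_j ((1-\lambda) q_j) Q_j.
\]
All the coefficients $\lambda p_i$ and $(1-\lambda) q_j$ are nonnegative, so the right-hand side is again a nonnegative combination of elements of $\mathbb{P}$, and therefore belongs to $\mathbf{K}(\mathbb{P})$.

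The only bookkeeping subtlety, and the main (minor) obstacle, is that the definition writes each element as a sum of a fixed number of terms, whereas merging the expansions of $C_1$ and $C_2$ produces a longer sum. Exactly as in the proof of Theorem \ref{th4.2}, this is handled by padding the shorter expansion with zero coefficients so that both sums range over a common index set; the padded terms carry coefficient $0$ and hence do not change the value, while the merged expression still has the required conic form. With this convention both closure checks go through verbatim, and together they establish that $\mathbf{K}(\mathbb{P})$ is a convex cone.
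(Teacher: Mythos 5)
Your proof is correct and is exactly the argument the paper has in mind: the paper's own ``proof'' of Theorem~\ref{th6.2} is literally the single sentence ``The proof is obvious,'' implicitly deferring to the direct verification already carried out for Theorem~\ref{th4.2}, which is what you spell out (closure under nonnegative scaling and under convex combinations, with the same zero-padding convention for sums of different lengths). So you have supplied the details the paper omits, rather than taken a different route.
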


\begin{proof}
The proof is obvious.
\end{proof}

From this theorem we can say that $\mathbf{K}(\mathbb{P})$ is the convex cone generated by $\mathbb{P}$.\\

Here we have to note that if we formulate a semi-SDP similar to \ref{eq4.6}-\ref{eq4.10} by running $C$ over $\mathbf{K}(\mathbb{P})$ it will not necessarily be true that the optimization may take $0$ value only if $C\in\mathbb{P}$.
If this happens then this cone program fails to identify interpolating quantum channels in $\mathbb{P}$.
But we shall see in the following theorem that if we take $\mathbb{P}$ to be convex then $\Delta=0$ only if $C\in\mathbb{P}$.\\

\begin{theorem}
Any arbitrary state in $\mathbf{K}(\mathbb{P})$ is in $\mathbb{P}$ if and only if $\mathbb{P}$ is convex.
\end{theorem}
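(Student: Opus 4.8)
The plan is to prove the biconditional by establishing the two implications separately, and the crucial observation that drives both is that every element of $\mathbb{P}$ is a \emph{state}, i.e. a positive semi-definite matrix of trace one, because $\mathbb{P}$ consists of the (normalized) Choi matrices of quantum channels, as noted earlier in the paper. This single fact is what converts a conic combination into a convex combination whenever the resulting matrix happens itself to be a state, and it is the hinge on which both directions turn.

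First, for the direction assuming $\mathbb{P}$ is convex, I would take an arbitrary state $C \in \mathbf{K}(\mathbb{P})$ and write it, by definition of $\mathbf{K}(\mathbb{P})$, as $C = \sum_i p_i P_i$ with $p_i \geq 0$ and $P_i \in \mathbb{P}$. Taking traces and using $tr[P_i] = 1$ for each $i$ gives $tr[C] = \sum_i p_i$. Since $C$ is assumed to be a state, $tr[C] = 1$, and hence $\sum_i p_i = 1$. After discarding any zero-weight terms (at least one weight is positive, since $C \neq 0$), this exhibits $C$ as a genuine convex combination of points of $\mathbb{P}$, so convexity of $\mathbb{P}$ yields $C \in \mathbb{P}$.

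Conversely, assuming that every state in $\mathbf{K}(\mathbb{P})$ lies in $\mathbb{P}$, I would pick arbitrary $P_1, P_2 \in \mathbb{P}$ and $\lambda \in [0,1]$, and verify that $D := \lambda P_1 + (1-\lambda) P_2$ is simultaneously an element of $\mathbf{K}(\mathbb{P})$ (it is a nonnegative combination of members of $\mathbb{P}$) and a state: it is positive semi-definite as a convex combination of positive semi-definite matrices, and $tr[D] = \lambda + (1-\lambda) = 1$. The hypothesis then forces $D \in \mathbb{P}$, which is precisely the statement that $\mathbb{P}$ is convex.

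I do not expect a substantial obstacle here; the only point requiring care is the normalization convention, namely that the Choi matrix attached to a quantum channel in this paper is taken to be a trace-one state rather than a trace-$n$ matrix. Once that is fixed, the trace bookkeeping that turns $\sum_i p_i$ into $1$ is automatic and both implications collapse to a one-line computation. Had one instead adopted the unnormalized convention, the identical argument would go through after rescaling by the common trace of the generators, so the conclusion is robust to that choice.
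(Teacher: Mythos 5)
Your proof is correct and takes essentially the same approach as the paper: both arguments hinge on the trace computation $tr[\sum_i p_i P_i]=\sum_i p_i$ (valid because elements of $\mathbb{P}$ are trace-one Choi matrices), which identifies the states in $\mathbf{K}(\mathbb{P})$ with the convex combinations of elements of $\mathbb{P}$. The only difference is organizational: the paper packages this as ``the states in $\mathbf{K}(\mathbb{P})$ form the convex hull $\mathbf{S}(\mathbb{P})$'' and then cites the fact that a set equals its convex hull if and only if it is convex, whereas you prove the two implications directly, your two-point argument being exactly the content of that cited fact.
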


\begin{proof}
First, we shall prove the set of all states in $\mathbf{K}(\mathbb{P})$ is 
$$\mathbf{S}(\mathbb{P})= \left\{\sum_{i=1}^n p_iP_i: p_i\geq0,\sum_{i=1}^np_i=1 \text{ and } P_i\in\mathbb{P} \quad \forall \; i \right\}$$.\\

Let $C=\sum_{i=1}^n p_iP_i\in\mathbf{S}(\mathbb{P})$.
Then,
\begin{eqnarray*}
    tr[C]=tr[\sum_{i=1}^n p_iP_i]=\sum_{i=1}^n p_itr[P_i]=\sum_{i=1}^n p_i=1
\end{eqnarray*}

Hence $C$ is a state in $\mathbf{K}(\mathbb{P})$.\\

Let $D=\sum_{i=1}^n p_iP_i$ be an arbitrary state in  $\mathbf{K}(\mathbb{P})$.
Then $p_i\geq 0\forall i$ and $tr[D]=1$.\\
\begin{eqnarray*}
    1=tr[D]=tr[\sum_{i=1}^n p_iP_i]=\sum_{i=1}^n p_itr[P_i]=\sum_{i=1}^n p_i
\end{eqnarray*}

Hence, $D\in\mathbf{S}(\mathbb{P})$ and therefore our claim at the beginning of the proof is true.\\

Now $\mathbf{S}(\mathbb{P})$ is the set of all convex combinations of the elements in $\mathbb{P}$.\\

We know that $\mathbf{S}(\mathbb{P})=\mathbb{P}$ if and only if $\mathbb{P}$ is convex.
Combining this with our previous claim we can conclude that any arbitrary state in $\mathbf{K}(\mathbb{P})$ is in $\mathbb{P}$ if and only if $\mathbb{P}$ is convex.
\end{proof}

We shall conclude our discussion of this section with the following theorem:

\begin{theorem}\label{th6.3}
Let $\mathcal{P}$ be a property on quantum channels and let $\mathbb{P}$ be the set of quantum channels that satisfy $\mathcal{P}$.
If $\mathbb{P}$ is convex, then the problem of interpolation by elements in $\mathbb{P}$ can be reconstructed as the following semi-SDP:

\begin{equation}\label{eq6.1}
\Delta =\inf\{ w\lambda+\sum_i tr[P_i+Q_i]\}
\end{equation}
Subject to
\begin{gather*}
    P_i,Q_i\geq 0\\
     P_i-Q_i =tr_2[C( I \otimes X_i^T)]-Y_i \quad \forall \; i\\
     C\in\mathbf{K}(\mathbb{P})\\
      -\lambda I \leq tr_1[C]- I \leq\lambda  I.
\end{gather*}

\end{theorem}

\section{An Explicit construction of Interpolating Entanglement Breaking Map When The Given Initial States Are Orthogonal}

In this section, we are going to solve the interpolation problem with an entanglement breaking map for a very restricted case, i.e. when the set of initial states $\{ A_i\}_{i=1}^k$ is orthogonal.\\

Moreover, we are going to prove that the above-mentioned entanglement breaking map is an EBT if $ I \in span(\{ A_i\}_{i=1}^k)$.\\

Our main result for this section is:\\
\begin{prop}\label{7}
    Let $\mathcal{A}=\{ A_1,A_2,...,A_k\}$ be an orthogonal set of positive matrices with the identity matrix $I$ in $span(\mathcal{A})$ and $\mathcal{B}=\{ B_1,B_2,...,B_k\}$ be another set of positive matrices. Then the following statements are equivalent:

(a) $tr[A_i]=tr[B_i]$ for each $i\in\{1,2,...,k\}$.

(b) There is an entanglement breaking trace-preserving map $\phi:\mathbb{M}_n\rightarrow\mathbb{M}_n$ such that $\phi(A_i)=B_i$ for each $i\in\{1,2,...,k\}$.
\end{prop}

To prove this proposition we have to prove a few lemmas first.
Those are:
\begin{lemma}\label{lm3.2}
Let $V$ be a one-dimensional subspace of $\mathbb{M}_n$ spanned by a positive matrix $A$ with $\langle A,A\rangle=1$.
Let $P_V$ be the projection map onto $V$ and let $\phi$ be a completely positive map on $\mathbb{M}_n$.
Then $\phi_V=\phi\circ P_V$ is completely positive.
\end{lemma}

\begin{proof}
By definition of projection map, $P_V(B)=\langle B,A\rangle A=tr[B^*A]A$ for any $B\in\mathbb{M}_n$.
Let assume, $A=\begin{pmatrix}
a_{ij}
\end{pmatrix}_{ij}$

Then, 
\begin{eqnarray*}
  P_V(E_{rs})=tr[E_{rs}^*A]A=tr[E_{sr}A]A=a_{sr}A  
\end{eqnarray*}

Then 
\begin{eqnarray*}
     I \otimes\phi_{V}\begin{pmatrix}
(\begin{pmatrix}
E_{ij}
\end{pmatrix}
\end{pmatrix}_{ij}=\begin{pmatrix}
\begin{pmatrix}
\phi_V(E_{ij})
\end{pmatrix}
\end{pmatrix}_{ij}=\begin{pmatrix}
\begin{pmatrix}
\phi(a_{ji}A)
\end{pmatrix}
\end{pmatrix}_{ij}
&=&\begin{pmatrix}
\begin{pmatrix}
a_{ji}\phi(A)
\end{pmatrix}
\end{pmatrix}_{ij}\\
&=&\begin{pmatrix}
a_{ji}
\end{pmatrix}_{ij}\otimes\phi(A)\\
&=&A^T\otimes\phi(A)
\end{eqnarray*}
 is positive.

Hence $\phi_V$ is completely positive.
\end{proof}

\begin{lemma}\label{lm3.3}
Let $\mathcal{A}=\{ A_1,A_2,...,A_k\}$ be an orthogonal set of positive matrices with the $I\in span(\mathcal{A})$ and $\mathcal{B}=\{ B_1,B_2,...,B_k\}$ be another set of positive matrices. Then the following statements are equivalent:

(a) $tr[A_i]=tr[B_i]$ for each $i\in\{1,2,...,k\}$.

(b) There is a completely positive trace preserving(CPT) map $\phi:\mathbb{M}_n\rightarrow\mathbb{M}_n$ such that $\phi(A_i)=B_i$ for each $i\in\{1,2,...,k\}$.
\end{lemma}
\begin{proof}
Let us assume (a).
By theorem \ref{th2.1}, there are CPT maps $\phi_i:\mathbb{M}_n\rightarrow\mathbb{M}_n$ such that $\phi_i(A_i)=B_i$ for each $i\in\{1,2,...,k\}$.

Let us define $\Tilde{\phi}_i=\phi_i\circ P_{A_i}$ where $P_{A_i}$ is the projection map onto the subspace spanned by $A_i$.
Then by lemma \ref{lm3.2}, $\Tilde{\phi}_i$ is completely positive for every $i$.

Let's define $\phi=\sum_{i=1}^k\Tilde{\phi}_i$.

Then,
\begin{eqnarray*}
\phi(A_j)=\sum_{i=1}^k\Tilde{\phi}_i(A_j)=\sum_{i=1}^k\phi_i(P_{A_i} (A_j))&=& \sum_{i=1}^k\phi_i(\frac{\langle A_j,A_i\rangle}{\langle A_i,A_i\rangle}A_i)\\
&=&\sum_{i=1}^k\phi_i(\delta_{ij}A_i)\\
&=&\sum_{i=1}^k\delta_{ij}\phi_i(A_i)\\
&=&\phi_j(A_j)=B_j
\end{eqnarray*}

Moreover, 
\begin{eqnarray*}
    I \otimes\phi\begin{pmatrix}
\begin{pmatrix}
E_{rs}
\end{pmatrix}
\end{pmatrix}=\sum_{i=1}^k  I \otimes\Tilde{\phi}_i\begin{pmatrix}
\begin{pmatrix}
E_{rs}
\end{pmatrix}
\end{pmatrix}=\sum_{i=1}^k A_i^T\otimes\phi_i(A_i)=\sum_{i=1}^k A_i^T\otimes B_i
\end{eqnarray*}

Hence clearly $ I \otimes\phi\begin{pmatrix}
\begin{pmatrix}
E_{rs}
\end{pmatrix}
\end{pmatrix}$ is positive and therefore $\phi$ is completely positive.

First, we shall prove that $\phi$ is trace-preserving on $V = span(\mathcal{A})$.

Let $A\in V$ be arbitrary.
Then $A=\sum_{i=1}^k\alpha_iA_i$ for some $\alpha_i\in\mathbb{C}$.

Hence, 
\begin{eqnarray*}
    \phi(A)=\sum_{i=1}^k\alpha_i\phi(A_i)=\sum_{i=1}^k\alpha_iB_i.
\end{eqnarray*}

Therefore,
\begin{eqnarray*}
    tr[\phi(A)]=tr[\sum_{i=1}^k\alpha_iB_i]=\sum_{i=1}^k\alpha_itr[B_i]=\sum_{i=1}^k\alpha_itr[A_i]=tr[\sum_{i=1}^k\alpha_iA_i]=tr(A)
\end{eqnarray*}

So $\phi$ is a completely positive map sending $A_i$ to $B_i$ and trace-preserving on $V=span(\mathcal{A}).$

Let $V^\perp$ be the perpendicular space of $V$.
Let $A\in\mathbb{M}_n$ be arbitrary.

Then $A$ can be written as $A=A_1+A_2$ where $A_1\in V$ and $A_2\in V^\perp$.

Hence 
\begin{eqnarray*}
tr[\phi(A)]=tr[\phi(A_1+A_2)]=tr[\phi(A_1)]+tr[\phi(A_2)]=tr[A_1]+tr[0]=tr[A_1]
\end{eqnarray*}

Hence it is enough to show that $Tr[B]=0$ for any $B\in V^\perp$.
\begin{eqnarray*}
    I\in V \Rightarrow span(I)\subseteq V \Rightarrow V^\perp\subseteq span(I)
\end{eqnarray*}

Let $P_I$ be the projection map onto $span(I)$.

Then,
\begin{eqnarray*}
tr[P_I(B)]=tr[\frac{\langle I,V\rangle}{\langle I,I\rangle} I]=\frac{\langle I,V\rangle}{\langle I,I\rangle}tr[I]=tr[B].   
\end{eqnarray*}

Now if $B\in V^\perp\subseteq span(I)^\perp$, then $Tr[B]=Tr[P_I(B)]=Tr[0]=0$.

Now (a) is obviously implied if we assume (b).

\end{proof}

To prove Proposition \ref{7}, we only need to prove that the $\phi$ we constructed in lemma \ref{lm3.3} is entanglement breaking.
The proof is as follows:

\begin{proof}
From theorem \ref{th2.1} of \cite{MR2837768} there exist an $n\times n$ non-negative matrix $D_i=(d_{pq}^{(i)})$ such that $(b_{ij})=(a_{ij})D_i$ where $(a_{ij}),(b_{ij})$ are diagonal elements of $A_i$ and $B_i$.
Let $F_{ij}$ be the matrix whose j'th column is $\left( \sqrt{d_{1j}^{(i)}},\sqrt{d_{2j}^{(i)}},\dots,\sqrt{d_{nj}^{(i)}}\right)^T$ and rest of the elements are $0$.
Then we can write, $B_i = \sum_{j=1}^nF_{ij}^*A_iF_{ij}$.
Let define, $\psi_i:\mathbb{M}_n\rightarrow\mathbb{M}_n$ as 
\begin{eqnarray*}
    \psi_i(X)=\sum_{j=1}^nF_{ij}^*XF_{ij}
\end{eqnarray*}

Now by lemma \ref{lm3.2} $P_i = P_{A_i}$, the projection map onto the subspace spanned by $A_i$ is completely positive as $P_i=I\circ P_i$.\\
By Choi-Kraus theorem, 
\begin{eqnarray*}
    P_i(X)=\sum_{j=1}^nG_{ij}^*XG_{ij}
\end{eqnarray*}
Hence
\begin{eqnarray*}
   \tilde{\psi_i}(X)=\psi_i\circ P_i(X)=\sum_{j=1}^n\sum_{l=1}^nF_{ij}^*G_{il}^*XG_{il}F_{il}=\sum_{j=1}^n\sum_{l=1}^n(G_{il}F_{ij})^*XG_{il}F_{ij} 
\end{eqnarray*}
Now $rank(F_{ij})\leq 1$ for all $i,j$.\\
Therefore, 
\begin{eqnarray*}
    rank(F_{ij}G_{il})\leq min\{rank(F_{ij}),rank(G_{il})\}\leq 1
\end{eqnarray*}

Therefore, each $\tilde{\psi_i}$ is entanglement breaking.\\
Therefore, $\psi=\sum_{i=1}^k\tilde{\psi_i}$ is also entanglement breaking.\\
Now,
\begin{eqnarray*}
    \psi(A_j)=\sum_{i=1}^k\tilde{\psi_i}(A_j)=\sum_{i=1}^k\psi_i(P_i(A_j))=\sum_{i=1}^k\delta_{ij}\psi_i(A_j)=\psi_j(A_j)=B_j
\end{eqnarray*}
We can show that $\psi$ is trace-preserving in the same way we showed $\phi$ is trace-preserving as $\psi=\phi$ on the span of $\mathcal{A}$.

\end{proof}

If we choose $\mathcal{A}$ and $\mathcal{B}$ to be sets of states then $tr[A_i]=tr[B_j]=1$ for any $i,j$.
Hence as long as we have an orthogonal set $\mathcal{A}$ of states such that $ I \in\mathbb{A}$ we can construct an EBT $\phi$ that takes $\mathcal{A}$ to $\mathcal{B}$ for any given set of states $\mathcal{B}$.

In this section, we have found a very strong result about the existence of interpolating EBT maps between two given sets of states, but it imposes many restrictions on the set of initial states.

\section*{Acknowledgement}
Arnab Roy acknowledges financial support from the Department of Mathematics, University of Delaware. Saikat Patra acknowledges financial support from CSIR JRF.
%\bibliographystyle{amsalpha}
%\bibliography{biblio.bib}
\newcommand{\etalchar}[1]{$^{#1}$}
\def\Dbar{\leavevmode\lower.6ex\hbox to 0pt{\hskip-.23ex \accent"16\hss}D}
  \def\Dbar{\leavevmode\lower.6ex\hbox to 0pt{\hskip-.23ex \accent"16\hss}D}
  \def\polhk#1{\setbox0=\hbox{#1}{\ooalign{\hidewidth
  \lower1.5ex\hbox{`}\hidewidth\crcr\unhbox0}}} \def\cprime{$'$}
\providecommand{\bysame}{\leavevmode\hbox to3em{\hrulefill}\thinspace}
\providecommand{\MR}{\relax\ifhmode\unskip\space\fi MR }
% \MRhref is called by the amsart/book/proc definition of \MR.
\providecommand{\MRhref}[2]{%
  \href{http://www.ams.org/mathscinet-getitem?mr=#1}{#2}
}
\providecommand{\href}[2]{#2}

\end{document}